\newcommand{\setX}{\mathbbmss{X}}
\newcommand{\setL}{\mathbbmss{L}}
\newcommand{\setR}{\mathbbmss{R}}
\newcommand{\setS}{\mathbbmss{S}}
\newcommand{\setC}{\mathbbmss{C}}
\DeclareMathOperator*{\argmin}{argmin}
\newcommand{\rmp}{\mathrm{p}}
\newcommand{\rmF}{\mathrm{F}}
\newcommand{\rmR}{\mathrm{R}}
\newcommand{\rmG}{\mathrm{G}}
\newcommand{\her}{\mathsf{H}}
\newcommand{\bi}{\mathbf{i}}
\newcommand{\rms}{\mathrm{s}}
\newcommand{\dec}{\mathrm{dec}}
\newcommand{\reg}{\mathrm{reg}}
\newcommand{\loge}{\left. \mathrm{ln} \right.}
\newcommand{\bx}{{\boldsymbol{x}}}
\newcommand{\vv}{\mathrm{v}}
\newcommand{\xx}{\mathrm{x}}
\newcommand{\yy}{\mathrm{y}}
\newcommand{\zz}{\mathrm{z}}
\newcommand{\bv}{{\boldsymbol{v}}}
\newcommand{\bdd}{{\mathbf{d}}}
\newcommand{\dif}{\mathrm{d}}
\newcommand{\by}{{\boldsymbol{y}}}
\newcommand{\bn}{{\boldsymbol{n}}}
\newcommand{\trp}{\mathsf{T}}
\newcommand{\mI}{\mathbf{I}}
\newcommand{\mone}{\mathbf{1}}
\newcommand{\mJ}{\mathbf{J}}
\newcommand{\mU}{\mathbf{U}}
\newcommand{\mV}{\mathbf{V}}
\newcommand{\mH}{\mathbf{H}}
\newcommand{\snr}{\mathrm{SNR}}
\newcommand{\E}{\mathbbmss{E}\hspace{.5mm}}
\newcommand{\mse}{\mathrm{MSE}}
\newcommand{\Ex}[2]{\mathbbmss{E}_{#2}\left\{#1\right\}}
\newcommand{\norm}[1]{\lVert #1 \rVert}
\newcommand{\set}[1]{\left\lbrace #1 \right\rbrace}
\newcommand{\brc}[1]{\left( #1 \right)}
\newcommand{\real}[1]{\mathbbm{Re}\left\lbrace #1 \right\rbrace}
\newcommand{\abs}[1]{\lvert #1 \rvert}
\newcommand{\Sp}[1]{\mathrm{Supp}\left( #1 \right) }
\newtheoremstyle{mystyle}%                % Name
  {}%                                     % Space above
  {}%                                     % Space below
  {\it}%                                     % Body font
  {}%                                     % Indent amount
  {\bfseries}%                            % Theorem head font
  {:}%                                     % Punctuation after theorem head
  { }%                                    % Space after theorem head, ' ', or \newline
  {}%                                     % Theorem head spec (can be left empty, meaning `normal')
\theoremstyle{mystyle}
\newtheorem{definition}{Definition}
\newtheorem{proposition}{Proposition}
\newtheorem{remark}{Remark}
\algnewcommand\algorithmicLet{\textbf{Let}}
\algnewcommand\Let{\item[\algorithmicLet]}
\algnewcommand\algorithmicSet{\textbf{Set}}
\algnewcommand\Set{\item[\algorithmicSet]}
\algnewcommand\algorithmicInitiate{\textbf{Initiate}}
\algnewcommand\Initiate{\item[\algorithmicInitiate]}
\algnewcommand\algorithmicStart{\textbf{Begin}}
\algnewcommand\Begin{\item[\algorithmicStart]}
\algnewcommand\algorithmicEnd{\textbf{End}}
\algnewcommand\End{\item[\algorithmicEnd]}
\algnewcommand\algorithmicOutP{\textbf{Output:}}
\algnewcommand\Out{\item[\algorithmicOutP]}
\newcounter{bar}
\begin{document}

\begin{acronym}
\acro{mimo}[MIMO]{multiple-input multiple-output}
\acro{csi}[CSI]{channel state information}
\acro{awgn}[AWGN]{additive white Gaussian noise}
\acro{iid}[i.i.d.]{independent and identically distributed}
\acro{ut}[UT]{user terminal}
\acro{bs}[BS]{base station}
\acro{sm}[SM]{spatial modulation}
\acro{masm}[MA-SM]{multiple-active spatial modulation}
\acro{glse}[GLSE]{generalized least squared error}
\acro{rls}[RLS]{regularized least-squares}
\acro{rhs}[r.h.s.]{right hand side}
\acro{lhs}[l.h.s.]{left hand side}
\acro{wrt}[w.r.t.]{with respect to}
\acro{rs}[RS]{replica symmetry}
\acro{rsb}[RSB]{replica symmetry breaking}
\acro{ml}[ML]{maximum-likelihood}
\acro{rzf}[RZF]{regularized zero forcing}
\acro{psk}[PSK]{phase shift keying}
\acro{snr}[SNR]{signal-to-noise ratio}
\acro{rf}[RF]{radio frequency}
\acro{tdd}[TDD]{time division duplexing}
\acro{mf}[MF]{match filtering}
\acro{gamp}[GAMP]{generalized approximate message passing}
\acro{map}[MAP]{maximum-a-posterior}
\acro{mse}[MSE]{mean squared error}
\acro{mmse}[MMSE]{minimum MSE}
\acro{ssk}[SSK]{space shift keying}
\acro{rss}[RSS]{residual sum of squares}
\acro{lasso}[LASSO]{least absolute shrinkage and selection operator}
\end{acronym}

\title{RLS-Based Detection for Massive Spatial Modulation MIMO}
\author{
\IEEEauthorblockN{
Ali Bereyhi,
Saba Asaad,
Bernhard G\"ade,
Ralf R. M\"uller
}
\IEEEauthorblockA{
Institute for Digital Communications, Friedrich-Alexander Universit\"at Erlangen-N\"urnberg\\
\{ali.bereyhi, saba.asaad, bernhard.gaede, ralf.r.mueller\}@fau.de
\thanks{This work has been accepted for presentation in the IEEE International Symposium on Information Theory (ISIT) 2019 in Paris, France. The link to the final version in the Proceedings of ISIT will be available later.}
}
}
\IEEEoverridecommandlockouts

\maketitle

\begin{abstract}
Most detection algorithms in spatial modulation (SM) are formulated as linear regression via the regularized least-squares (RLS) method. In this method, the transmit signal is estimated by minimizing the residual sum of squares penalized with some regularization. This paper studies the asymptotic performance of a generic RLS-based detection algorithm employed for recovery of SM signals. We derive analytically the asymptotic average mean squared error and the error rate for the class of \textit{bi-unitarily invariant} channel matrices. 

The~analytic~results~are employed to study the performance of SM detection via the box-LASSO. The analysis demonstrates that the performance characterization for i.i.d. Gaussian channel matrices is valid for matrices with non-Gaussian entries, as well. This justifies the partially approved conjecture given in \cite{atitallah2017box}. The derivations further extend the former studies to scenarios with non-i.i.d. channel matrices.~Numerical~investigations validate the analysis, even for practical system dimensions. 
\end{abstract}

\IEEEpeerreviewmaketitle
\section{Introduction}
\label{sec:intro}
Recovery algorithms based on the \ac{rls} regression method are doubtless among the most studied schemes for ill-posed signal recovery in linear models. Examples of such algorithms are the \ac{lasso} \cite{tibshirani1996regression} and Tikhonov regularization method \cite{tikhonov1977methods} which are used in various applications,~e.g. sparse recovery \cite{candes2006robust}. The main task in these applications~is~to recover a signal $\bx$ from an underdetermined set of linear and~possibly~noisy observations $\by$. The \ac{rls}-based algorithms solve this problem by minimizing the \ac{rss} penalized via a regularization term, i.e. $\norm{\mH\bx-\by}^2+ f\brc{\bx}$ for some penalty $f\brc\cdot$ and the known projection~$\mH$.

This study investigates the characteristics of a generic class of \ac{rls}-based algorithms by considering~their applications~to the \ac{sm} technique recently proposed~for \ac{mimo} transmission~with~restricted hardware complexity \cite{jeganathan2009space,younis2010generalised}. For scenarios in which the number of active transmit antennas is smaller than the total number of available antenna elements, \ac{sm} utilizes the sparsity of the transmit signal to convey information via both the non-zero symbols and the support of the transmit signal. To this end, the information bits are divided into two subsets. One subset is used to select a unique subset of transmit antennas which are set active during transmission, and the other subset is transmitted over these active antennas via a conventional modulation technique, e.g. \ac{psk}. At the receive side, the receiver needs to jointly recover the index of the active antennas, and the transmitted symbols. 

Following analogy between the detection task in \ac{sm} and compressive sensing \cite{candes2006robust,donoho2006compressed}, several detection schemes were proposed in the literature based on sparse recovery techniques, e.g. \cite{yu2012compressed,xiao2017efficient,atitallah2017box}. These schemes often lie in the class of \ac{rls}-based recovery algorithms; hence, their performance~is~characterized by studying these recovery algorithms.
%\vspace*{-.5mm}
\subsection*{Contributions and Related Work}
%\vspace*{-.5mm}
Recent developments in multi-antenna technologies suggest the utilization of massive \ac{mimo} settings in the next generation of mobile networks \cite{hoydis2013massive}. In this respect, we characterize the \ac{rls}-based recovery algorithms in the large-system limit. The analysis differs from earlier studies on mathematically similar models, e.g. \cite{bereyhi2016statistical}, in the fact that the \textit{\ac{sm} signals are not \ac{iid}} in general. We address this via a conditional form of the \textit{asymmetric~decoupling property} derived for \ac{rls} recovery in \cite{bereyhi2018MAP}. Closed-form expressions for the average \ac{mse} and error rate in a massive multiuser \ac{masm} \ac{mimo} system are derived.

Our derivations extend available results in the literature in various respects. We demonstrate this by studying the example of~box-\ac{lasso}~recovery. For this example, our results extend the earlier derivations for \ac{iid} Gaussian channel matrices in \cite{atitallah2017box} to the class of \textit{bi-unitarily invariant} random matrices. Our investigations further show that the asymptotic distortion is of the same form for all \ac{iid}~channel~matrices. This justifies the universality conjecture which was partially approved in \cite{atitallah2017box}. 

%In this extended abstract, we investigate the performance of a generic class of detectors for uplink \ac{sm} transmission in a multiuser massive \ac{mimo} setting. These detectors invoke the \ac{rls} method for recovery~and comprise several known detectors as special cases,~e.g.~\cite{atitallah2017box}.
%
%In this study, we
%
%\ac{sm}
%
%We intend to derive theoretical bounds on error-rate of a generic \ac{masm} setting.
\subsection*{Notations} 
We represent scalars, vectors and matrices with non-bold, bold lower case and bold upper case letters, respectively. $\mI_K$ denotes the $K \times K$ identity matrix. $\mH^{\trp}$~and~$\mH^{\her}$ are the transposed and transposed conjugate of $\mH$. The real axes~and its non-negative subset are denoted by $\setR$ and $\setR^+_0$, respectively. The complex plane is shown by $\setC$. We use $\norm{\cdot}$ and $\norm{\cdot}_1$ to show the $\ell_2$- and $\ell_1$-norm, respectively. $\norm{\bx}_0$ denotes the ``$\ell_0$-norm'' of $\bx$ defined as the number of non-zero entries. For random variable $x$, the probability distribution is shown by $\rmp\brc{x}$. $\Ex{ \cdot}{}$~is~the expectation operator. We use the shortened notation $[N]$ to represent $\set{1, \ldots , N}$.
%For a given set $\setS$, we use the notation $\setS_0$ to denote $\setS_0 \coloneqq \set{0}\cup\setS$. $\setX^{\rm C}$ denotes the complement of $\setX$.
%\vspace*{-2mm}
\section{Problem Formulation}
%\vspace*{-1mm}
\label{sec:sys}
Consider uplink transmission in a Gaussian multiple access \ac{mimo} channel with $K$ transmitters and a single receiver. Each transmitter is equipped with $M_{\rm u}$ antennas and $L_{\rm u}$ \ac{rf} chains. Hence, in each channel use, only $L_{\rm u}$ transmit antennas are active at transmit terminals. We denote the fraction of active antennas by $\eta = {L_{\rm u}}/{M_{\rm u}}$.

At the receive side, an antenna array of size $N$ is employed. Hence, the receive vector $\by\in\setC^{N}$ reads
%\vspace*{-1mm} %is given by % received at the \ac{bs} is %given by
\begin{align}
\by = \left. \mH \right. \bx + \bn. \label{eq:sys}
\end{align}
Here, 
%\begin{itemize}
%\item 
$\bx$ represents the vector of transmit signals, i.e. %and is given~by 
\vspace*{-1mm}
\begin{align}
\bx = \left[ \bx_1^\trp,\ldots,\bx_K^\trp \right]^\trp
\end{align}
where $\bx_k\in\setC^{M_{\rm u}}$ is the signal transmitted by terminal $k$. As a result, $\bx$ is of size $M = K M_{\rm u}$. $\mH\in\setC^{M\times N}$ denotes the channel matrix. It is assumed that~the~channel~is~estimated~prior to data transmission, and hence $\mH$ is known at the receiver.\\ $\bn\in \setC^{N}$ represents noise and reads $\bn \sim\mathcal{CN}\brc{\boldsymbol{0}, \sigma^2 \mI_N}$.

\vspace*{-1mm}
\subsection{Channel Model}
The channel is assumed to experience frequency-flat fading with slow time variations. %, meaning that the channel gains are constant within the frequency band during the transmission. For this channel, w
We consider a generic fading model in which $\mH$ is a \textit{bi-unitarily invariant} random matrix. This means that for any pair of unitary matrices $\mU\in\setC^{N\times N}$ and $\mV\in\setC^{M\times M}$, which~are~independent of $\mH$,~the~joint~distribution of the entries of $\mH$ is identical to that of $\mU \mH \mV$~\cite{tulino2004random}.

 This ensemble comprises~a variety of fading models including the well-known \ac{iid} Rayleigh fading model.
 
\vspace*{-1mm}
\subsection{Spatial Modulation}
%To construct transmit signals $\bx_1,\ldots,\bx_K$ from the data, the transmitters employ \ac{masm}. This means that each terminal selects a subset of $L_{\rm u}$ antennas via a data stream and modulates $L_{\rm u}$ other independent data streams over them using a standard modulation scheme, e.g. \ac{psk}. %The key difference of \ac{masm} to conventional approaches, such as antenna selection, is that the indices of these active antennas are selected via another independent data stream. Hence, the information is conveyed in both the transmitted symbols and the support of signals.
%
%To illustrate the scheme, 
Let $\bdd_k$ denote a  sequence of information bits at terminal $k$. Without loss of generality, let $\bdd_k$ be an \ac{iid} binary sequence with uniform distribution. Assume that the active antennas~at each terminal transmit symbols from~alphabet~$\setS$~which~contains $2^S$ symbols. $\bx_k$ is then constructed as follows:
\begin{enumerate}
\item \textit{Assigning modulation indices:} Consider all possible subsets of $L_{\rm u}$ transmit antennas selected out of $M_{\rm u}$ available antennas at terminal $k$. The transmitter selects $2^I$ distinct subsets randomly and uniformly, where
\begin{align}
I = \left\lfloor \log_2 {M_{\rm u}\choose L_{\rm u}} \right\rfloor.
\end{align}
To each of these subsets a \textit{modulation index} is assigned.
\item \textit{Index modulation:} For given sequence $\bdd_k$, the transmitter chooses index $i_k\in [ 2^I ]$, such that the first $I$ bits of $\bdd_k$ be the binary representation of $i_k$. We denote the subset of $L_{\rm u}$ antenna indices corresponding to $i_k$ with $\setL \brc{i_k}$. %$\subseteq [M_{\rm u}]$.
\item \textit{Modulating multiple streams:} Terminal $k$ considers $L_{\rm u}$ independent blocks of $\bdd_k$, each of length $S$, and maps them into the symbols $s_{k}\brc{m} \in \setS$ for $m\in\setL\brc{i_k}$, using a standard modulation scheme, e.g. \ac{psk}. These symbols are then transmitted on the antennas corresponding to $i_k$.
%Given the subset of active antennas corresponding to $i_k$, ~on~the selected antennas. These symbols are %and is referred to as the \textit{active support} of user $k$.
\end{enumerate}

Considering the above \ac{sm} scheme, the $m$-th transmit signal entry of terminal $k$, i.e.~$x_{k,m}$~for $m\in [M_{\rm u}]$, reads $x_{k,m}= s_{k}\brc{m} \mone\set{m\in \setL\brc{i_k}}$, where $\mone\set{m\in \setL\brc{i_k}}$ is the indicator function, returning one, if $m\in \setL\brc{i_k}$ and zero, if $m\notin \setL\brc{i_k}$. %
%\begin{align}
%x_{k,m}=
%\begin{cases}
%s_{k}\brc{m} & m\in \setL\brc{i_k}\\
%0 & m\notin \setL\brc{i_k}
%\end{cases}. \label{eq:x_km}
%\end{align}
Hence, $\bx_k$ is an $L_{\rm u}$-sparse vector, i.e. only $L_{\rm u}$ entries are non-zero. As a result, $\bx$ is $L$-sparse, where $L=K L_{\rm u}$. 

We define the \textit{activity factor} as %$\rm AF$ as
%\begin{align}
$\mathrm{AF} \coloneqq {\norm{\bx}_0}/{M}$. %
%\end{align}
This factor describes the total fraction of active transmit antennas in the network. Noting that $M=KM_{\rm u}$, we have $\mathrm{AF} = \eta$.

\subsection{RLS-Based Detection Algorithms}
For data recovery, the receiver requires to detect both the support of $\bx$ and the transmitted symbols from $\by$. To~this~end, an \ac{rls}-based algorithm is employed. This algorithm determines a \textit{soft estimation} of $\bx$, for given $\mH$, as %, using \ac{rls} recovery algorithm
\begin{align}
%\RLS{\setX}{\by \vert \mH} 
\bx^\star \coloneqq \argmin_{\bv\in\setX_0^M} \left. \norm{\by-\mH \bv}^2 + \right. f_\reg \brc{\bv} ,
\end{align}
where $f_\reg \brc\cdot$ is a regularization function, and $\setX$ is a subset of $\setC$ including the alphabet $\setS$, i.e. $\setS\subseteq \setX \subseteq \setC$. The notation $\setX_0 \coloneqq \set{0}\cup\setX$ is further defined for brevity. 

Given $\bx^\star$, the detected vector is then given by mapping the soft estimation into a vector in $\setS_0^M$. This means $\hat{\bx} = f_\dec \brc{\bx^\star}$ where $f_{\rm dec} \brc{\cdot}: \setX_0^M \mapsto \setS_0^M$ is a decisioning function.%, e.g. the sign function.

%From the Bayesian points of view, the algorithm reduces to the optimal detector, i.e. \ac{map} detector, by setting $f_{\rm dec} \brc{\cdot}$ to be the identity function, i.e. $\hat{\bx} = \bx^\star$,  $\setX=\setS$ and 
%\begin{align}
%f_\reg \brc{\bv} = -\loge \rmp_{\bx} \brc{\bv} \label{eq:prior}
%\end{align}
%with $\rmp_\bx$ denoting the prior distribution of $\bx$ imposed by the statistical model of the data. Keeping $\setX=\setS$ and $f_{\rm dec} \brc{\cdot}$ to be the identity function, and considering $f_\reg \brc\cdot$ to be different from the one in \eqref{eq:prior}, the algorithm is interpreted as a mismatched \ac{map} detector in the Bayesian framework.
\subsubsection*{Special Cases}
The \ac{rls}-based recovery schemes recover various \ac{masm} detection algorithms. As an example, let~$\setX=\setS$ and $f_\reg\brc{\bx} = -\sigma^2 \loge \rmp\brc\bv$ with $\rmp\brc\cdot$ denoting the prior distribution of $\bx$. Moreover, set $f_\dec\brc{\bx} = \bx$. The algorithm in this case reduces to the optimal Bayesian detection scheme, i.e. \textit{\ac{map}}. 
%In fact, by setting the regularization term to , with $\rmp\brc\bv$ being the prior of $\bx$, the algorithm reduces to 

Noting that dimensions are rather large in massive \ac{mimo} systems, the non-convex choices of $\setX_0$ and $f_\reg\brc{\bx}$ result~in computationally intractable detection algorithms. As a result, convex forms are considered in practice. A well-known example is the so-called \textit{box-\ac{lasso}}. %\footnote{Also known as \textit{constrained-LASSO}, e.g. \cite{james2012constrained}.}. 
In box-\ac{lasso}, $\setX_0$ is set to a convex subset of the complex plane which contains the symbols in $\setS_0$. $f_\reg\brc{\bx}$ is further chosen proportional to the $\ell_1$-norm following the regularization approach proposed by Tibshirani in \cite{tibshirani1996regression}. In this case, the soft estimation comprises symbols from $\setX_0$. $f_{\rm dec} \brc{\cdot}$ is hence set to be an entry-wise thresholding operator which maps each estimated entry into either zero or a transmit symbol in $\setS$. 
%\begin{example}
%Assuming a binary \ac{psk} transmission, the constellation set reads $\setS = \{-\sqrt{P}, +\sqrt{P}\}$, for some positive real $P$. In this case, a possible choice for $\setX$ is $\setX = [-B,B]$ for some $B \geq \sqrt{P}$. In this case the soft estimation is given by
%\begin{align}
%\bx^\star \coloneqq \argmin_{\bv\in [-B,B]^M} \left. \norm{\by-\mH \bv}^2 + \lambda \norm{\bv}_1 \right. .
%\end{align}
%The detected signal is given by setting the $M-L$ entries of $\bx^\star$ with smallest absolute values zero and detecting the rest as $\hat{x}_m = \sqrt{P} \left. \sign{x^\star_m} \right.$. An equivalent representation of the decisioning function in this case is 
%\begin{align}
%f_{\rm dec} \brc{ x } = \sqrt{P} \left. \sign{x} \right. \mone\set{\abs{x} > \epsilon}
%\end{align}
%for some $\epsilon \in [0,B]$, where $\mone\set{\cdot}$ denotes the indicator function.
%\end{example}
%This generic form covers a large special cases. For instance,
%\begin{itemize}
%\item For $f_\reg \brc\bv \propto \norm{\bv}_0$ and $\setX=\setS$, the detector recovers optimal \ac{rls} recovery scheme.
%\item For $f_\reg \brc\bv \propto \norm{\bv}_1$ and $\setX=\setS$, the detector reduces to \ac{lasso} recovery over the exact signal constellation.
%\item Setting $f_\reg \brc\bv \propto \norm{\bv}_1$, and $\setX$ to be a convex subset of the complex plane, containing $\setS$, the detector reduces to \ac{lasso} recovery with box-relaxation which is often referred to as \textit{box-\ac{lasso}} detector for \ac{gsm}.
%\end{itemize}
\subsection{Performance Metrics}
Using the \ac{masm} scheme, each terminal transmits $I + L_{\rm u} S$ information bits per channel use. Comparing to conventional modulation schemes, the spectral efficiency in this~case~is~increased by $I$ bits per channel use. This is acquired~at~the~expense of diversity. In fact, the index modulation in \ac{masm} performs as a random selection algorithm which reduces the diversity gain, compared to other selection techniques. %
%number of transmit dimensions from $M_{\rm u}$ to $L_{\rm u}$. %
%\footnote{Note that by approaches like \textit{maximum channel gain} selection,~the~transmitter utilizes the whole \ac{csi} for antenna selection. Hence, the loss intro the diversity gain is significantly smaller \cite{molisch2005capacity}.}. 

In order to characterize the performance of \ac{masm} transmission, a distortion metric is further required. The common metric is the \textit{average error rate} defined as the probability of bit~flip~averaged over the transmit block size, i.e. $M$. %This metric is formally defined as follows.
\begin{definition}[Average error rate]
Let $\hat{\bx}$ denote the vector of detected signals. The average error rate is defined as
\begin{align}
\bar{P}_{\rm E} \brc{M}  \coloneqq \frac{1}{M} \sum_{m=1}^M \Pr \set{\hat{x}_m\neq x_m}. \label{eq:R}
\end{align}
\end{definition}
For \ac{rls}-based detectors, the distortion can also be defined with respect to the soft estimation. In this respect, we further consider the average \ac{mse} %between $\bx$ and its soft estimation $\bx^\star$ 
as another distortion metric. %This metric is defined as follows:

\begin{definition}[Average \ac{mse}]
\label{dist}
Consider the soft estimation $\bx^\star$. The average \ac{mse} is defined as
\begin{align}
\mse \brc{M} \coloneqq   \frac{1}{M} \E \norm{\bx^\star - \bx_m}^2.  \label{eq:dist}
\end{align}
\end{definition}

%The argument $M$ in \eqref{eq:R} and \eqref{eq:dist} indicates the dependency of the distortion metrics on the system dimension. In general, the metrics are also functions of other dimensions, i.e. $L$, $K$ and $N$. Nevertheless, we assume that for each realization of the setting these values are set linearly proportional to $M$.~The other arguments are hence dropped for sake of brevity.

\subsection{The Large-System Limit}
\label{sec:largelimit}
We aim to determine the asymptotic limit of the distortion metrics. In this respect, for bounded and fixed $L_{\rm u}$ and $M_{\rm u}$, we consider a sequence of settings with $N$ receive antennas and $K$ transmit terminals. We assume that $K$ is a deterministic sequence of $N$, such that
\begin{align}
\alpha = \lim_{N\uparrow\infty} \frac{K}{N}
\end{align}
is bounded. As a result, the sequence of \textit{channel loads}, defined as $\xi\coloneqq N/M$, converges to $\xi = {1}/{\alpha M_{\rm u}}$, as $N\to\infty$. The activity factor of this sequence of settings is constant in $N$ and reads $\mathrm{AF} = \eta$ for all choices of $N$. 

For a given setting with $N$ receive antennas, we define $\rmF_{\mJ}^N$ to be the empirical cumulative distribution of the eigenvalues of $\mJ = \mH^\her\mH$. It~is assumed that the sequence of $\rmF_{\mJ}^N$ has a deterministic limit~$\rmF_\mJ$, when $N$ grows large. The~Stieltjes transform of $\rmF_\mJ$ is given by
\begin{align}
\rmG \brc{z} = \int \frac{\dif \rmF_\mJ \brc{\lambda}}{\lambda - z} \label{eq:R-trans}
\end{align}
for some $z$ in the upper half complex plane. The $\rmR$-transform is further defined as $\rmR\brc{\omega} = \rmG^{-1} \brc{-\omega} - 1/\omega$ for~some~$\omega\in\setC$, such that $\rmR\brc{0} = \int \lambda \dif \rmF_\mJ \brc{\lambda}$. Here, $\rmG^{-1} \brc{\cdot}$ denotes the inverse with respect to composition.

\section{Main Results}
\label{sec:large}
The analytic derivations in this section follows the results given in \cite{bereyhi2018MAP} via the replica method. The consistency of the results relies on the validity of some conjectures,~such~as replica continuity and replica symmetry. %The distortion metrics are not necessarily determined in a computationally tractable way. Hence, some advanced analytical tools are required. We invoke the replica method which has been developed in the theory of spin glasses. Although 
Although~these~conjectures lack mathematical proofs, several studies have confirmed the consistency for some particular examples, e.g. \cite{reeves2016replica}.

We state the main results using the \textit{decoupled setting}.~This is a tunable scalar setting whose distortion metrics are analytically calculated for all tuning parameters. It is shown that for a specific tuple of the tuning parameters, the distortion metrics of this setting give the large-system limits~of the average error rate and \ac{mse}.

\begin{definition}[Decoupled setting]
\label{def:Dec}
For given $c$ and $q$, define %$\tau \coloneqq {1}/{ \rmR \brc{-c} }$ and 
\begin{subequations}
\begin{align}
\tau \brc{c} &\coloneqq\frac{1}{ \rmR \brc{-c} } \label{eq:rs-4a} \\
\theta\brc{c,q} &\coloneqq \frac{1}{ \rmR \brc{-c} } \sqrt{\frac{\partial}{\partial c} \left[ \brc{ \sigma^2 c - q } \rmR \brc{-c} \right]}. \label{eq:rs-4b}
\end{align}
\end{subequations}
Let $\xx = \psi \rms$, where $\rms$ is uniformly distributed on $\setS$, and~$\psi$ is a Bernoulli random variable for which $\Pr\set{\psi=1} = 1- \Pr\set{\psi=0} = \eta$. Define the decoupled output $\yy\brc{c,q}$ as %for a given transmit signal $\bx$ be
\begin{align*}
\yy\brc{c,q} = \xx +  \theta \brc{c,q} \zz \label{eq:yy_dec}
\end{align*}
with $\zz\sim\mathcal{CN} \brc{0, 1 }$. Then, the decoupled soft estimation is %given by
\begin{equation}
\label{eq:dec_est}
\xx^\star\brc{c,q} \coloneqq \argmin_{\vv\left. \in\right. \setX_0} \left. \frac{1}{\tau\brc{c}}\right. \left. \abs{\yy\brc{c,q}- \vv}^2 + f_\reg \brc{\vv} \right. .
\end{equation}
The decoupled detected symbol is moreover defined in terms of $\xx^\star$ as
%\begin{align*}
$\hat\xx \brc{c,q} \coloneqq f_{\dec} \brc{ \xx^\star\brc{c,q} }$. %
%\end{align*}
For this setting, the error probability reads $Q_{\rm E} \brc{c,q} = \Pr \set{ \hat\xx \brc{c,q} \neq \xx }$
%\end{align*}
and the \ac{mse} is given by %
%\begin{align*}
$\Gamma\brc{c,q} = \Ex{ \abs{\xx^\star \brc{c,q} - \xx}^2 }{}$.
%\end{align*}
\end{definition}

Unlike $\bar{P}_{\rm E} \brc{M}$ and $\mse \brc{M}$, the derivation of $Q_{\rm E} \brc{c,q}$ and $\Gamma\brc{c,q}$ deals with a scalar optimization problem which is tractable for any $\setX$ and $f_\reg \brc\cdot$. Proposition~\ref{proposition:RS} indicates that for specific choices of $c$ and $q$, the average error rate and the average \ac{mse} are given by the error probability and the \ac{mse} of the decoupled setting, respectively. The values~of~$c$~and~$q$, for which this equivalency happens, is given in the proposition via a system of fixed-point equations.

\begin{proposition}
\label{proposition:RS}
Consider the sequence of settings illustrated in Section~\ref{sec:largelimit}. As $M$ grows large, we have
\begin{align}
\lim_{M\uparrow \infty} \bar{P}_{\rm E} \brc{M} = Q_{\rm E} \brc{c^\star,q^\star}
\end{align}
and
\begin{align}
\lim_{M\uparrow \infty} \mse \brc{M} = \Gamma\brc{c^\star,q^\star}
\end{align}
where $c^\star$ and $q^\star$ are solutions to the fixed-point equations
\begin{subequations}
\begin{align}
c \left. \theta\brc{c,q} \right.  &= \left. \tau\brc{c} \right. \Ex{ \real{\brc{\xx^\star\brc{c,q} - \xx } \zz^*} }{} \label{eq:rs-6a} \\
q &= \Ex{ \abs{\xx^\star\brc{c,q} - \xx}^2  }{}. \label{eq:rs-6b}
\end{align}
\end{subequations}
\end{proposition}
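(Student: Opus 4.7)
The plan is to invoke the replica method in the spirit of \cite{bereyhi2018MAP}, adapted to the structured (non-\ac{iid}) law of the SM signal $\bx$. First, I would express both distortion metrics as expectations over the Gibbs measure associated with the cost in \eqref{eq:dec_est}: introducing the Hamiltonian $\mathcal{H}(\bv) = \norm{\by - \mH\bv}^2 + f_\reg(\bv)$, one writes $\bx^\star$ as the zero-temperature limit of the density $\rmq_\beta(\bv)\propto\exp\{-\beta\mathcal{H}(\bv)\}$ supported on $\setX_0^M$, so that the per-component MSE and the overlap between $\bx^\star$ and $\bx$ can be recovered as derivatives of $\log \mathcal{Z}_\beta$, where $\mathcal{Z}_\beta = \int_{\setX_0^M}\exp\{-\beta\mathcal{H}(\bv)\}\dif\bv$.

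Next, using the replica identity $\E\log\mathcal{Z}_\beta = \lim_{n\downarrow 0}(\E\mathcal{Z}_\beta^n - 1)/n$, I would evaluate $\E\mathcal{Z}_\beta^n$ for integer $n$ by first taking the expectation over $\mH$ conditional on the replicas and on $\bx$. Here the bi-unitary invariance of $\mH$ is essential: following the calculus developed in \cite{bereyhi2018MAP}, this inner expectation concentrates on a spectral functional of the $(n+1)\times(n+1)$ overlap matrix, whose asymptotics are captured, via the Harish--Chandra--Itzykson--Zuber integral, by the $\rmR$-transform of the limiting eigenvalue law $\rmF_\mJ$. The free energy then reduces to a saddle-point problem over the space of replica overlap matrices.

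Imposing the replica-symmetric ansatz parameterized by two scalars that at the saddle are identified with $c^\star$ and $q^\star$, the $M$-dimensional coupled problem decouples across components. The effective single-letter problem that emerges is precisely \eqref{eq:dec_est}, with $\tau(c)$ appearing as the coefficient of the quadratic term and $\theta(c,q)$ as the effective noise standard deviation; both factors come directly from differentiating the $\rmR$-transform spectral functional and therefore match \eqref{eq:rs-4a}--\eqref{eq:rs-4b}. Differentiating the replica-symmetric free energy with respect to $c$ and $q$ then yields the stationarity conditions \eqref{eq:rs-6a}--\eqref{eq:rs-6b}, and the final expressions for $\bar{P}_{\rm E}(M)$ and $\mse(M)$ follow by taking $\beta\uparrow\infty$ and applying $f_\dec$ to the decoupled estimator.

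The subtlety requiring care beyond the standard replica manipulations is the structured prior of $\bx$: within each block $\bx_k$, exactly $L_{\rm u}$ out of $M_{\rm u}$ entries are active, and the indices are selected jointly rather than independently. To handle this I would invoke the conditional form of the asymmetric decoupling property of \cite{bereyhi2018MAP}, performing decoupling on a per-terminal rather than per-entry basis. Since $L_{\rm u}$ and $M_{\rm u}$ remain bounded while $K\uparrow\infty$, the per-block free-energy contributions self-average, and the entry-wise statistics reduce exactly to the Bernoulli--uniform mixture $\xx = \psi\rms$ with $\Pr\{\psi=1\} = \eta$ used in Definition~\ref{def:Dec}. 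The main obstacle is not algebraic but foundational: as acknowledged at the start of Section~\ref{sec:large}, the argument relies on the replica continuity and replica symmetry conjectures, as well as the interchange of the limits $\beta\uparrow\infty$ and $M\uparrow\infty$, none of which is rigorously established in this generality.
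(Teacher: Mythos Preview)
Your outline reproduces the full replica derivation of \cite{bereyhi2018MAP} and then proposes to absorb the block-sparsity of $\bx$ by ``per-terminal'' decoupling with self-averaging over the $K$ blocks. The paper takes a shorter and cleaner route: it does \emph{not} redo any replica calculation at all. Instead, it conditions on the vector of modulation indices $\bi=(i_1,\ldots,i_K)$. Once $\bi$ is fixed, the support $\Sp{\bi}$ is deterministic, and the conditional prior in \eqref{eq:condDist} factorizes across entries into two i.i.d.\ populations (uniform on $\setS$ on the support, Dirac at $0$ off it). This is precisely the two-block asymmetric model for which the decoupling property of \cite{bereyhi2018MAP} is already stated, so the result can be invoked as a black box. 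The limiting conditional distortion in \eqref{eq:LimitDISTortion} then turns out to be independent of $\bi$, so the final expectation over $\bi$ is trivial, and setting $F_{\rm D}$ to the squared error or to $\mone\{f_\dec(x^\star)\neq x\}$ yields the two claims.

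The advantage of the paper's conditioning trick is that it eliminates the intra-block dependence \emph{before} any asymptotics, so ordinary per-entry decoupling applies and no new replica computation is needed. Your per-terminal proposal, by contrast, leaves a loose end: a block-level decoupling would in principle produce an $M_{\rm u}$-dimensional effective model with the exact-$L_{\rm u}$-sparsity constraint still present, and you do not explain the step that collapses this to the scalar Bernoulli--uniform model $\xx=\psi\rms$ of Definition~\ref{def:Dec}. That collapse is immediate under the paper's conditioning argument (the fraction of active indices is exactly $\eta$ for every $\bi$), but it is not automatic from ``per-block self-averaging'' alone. Your route can likely be completed, but the paper's is both shorter and avoids this gap.
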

\begin{proof}
The proof follows the \textit{asymmetric} form of the \textit{decoupling property} derived for \ac{rls} recovery in \cite{bereyhi2018MAP}. To start with the proof, let $\bi = \left[ i_1 , \ldots, i_K\right]^\trp$  be the vector of modulation indices for a setting with transmit dimension $M$. For a given distortion function $F_{\rm D} \brc{\cdot;\cdot}: \setX_0 \times \setS_0 \mapsto \setR^+_0$, define the conditional average distortion as
\begin{align}
D_M\brc{\bi} &\coloneqq \frac{1}{M} \sum_{m=1}^M \Ex{ F_{\rm D} \brc{x^\star_m;x_m} \vert \bi }{}. \label{eq:D_i}
\end{align}
For a given $\bi$, let 
%\begin{align}
$\Sp\bi = \set{ m \in [M] : x_m \neq 0 }$ 
%\end{align}
contain the indices of the non-zero entries in $\bx$. It~is~then~straightforward to conclude that conditioned to $\bi$, the entries $x_m$ for $m\in\Sp\bi$ are independent and uniformly distributed~on~$\setS$. As a result, the conditional distribution $\rmp\brc{\bx|\bi}$ reads
\begin{align}
\rmp\brc{\bx|\bi} &= \prod_{m\left.\in\right.\Sp\bi} 2^{-S} \prod_{m\left.\notin\right.\Sp\bi} \mone\set{ x_m = 0}. \label{eq:condDist}
\end{align}
This distribution follows the asymmetric signal model that considered in \cite{bereyhi2018MAP} with two blocks of \ac{iid} sequences. 

By standard derivations, it is concluded from the asymptotic decoupling property in \cite{bereyhi2018MAP} that for $m\in\Sp\bi$ and $m\notin\Sp\bi$, $\rmp\brc{x_m^\star\vert x_m, \bi}$ converges to $\rmp\brc{\xx^\star\brc{c^\star,q^\star}\vert \xx=\rms}$ and $\rmp\brc{\xx^\star\brc{c^\star,q^\star}\vert \xx=0}$, respectively, as $N$ tends to $\infty$. Here, $\rmp\brc{\xx^\star\brc{c^\star,q^\star}\vert \xx=u}$ denotes the conditional distribution of the~decoupled~soft~estimation defined in \eqref{eq:dec_est}, for $c=c^\star$ and $q=q^\star$, given that the decoupled input is set to $\xx=u$. $\rms$ is moreover distributed uniformly over $\setS$. Substituting into \eqref{eq:D_i}, after some lines of calculations, we have
%
%where $x_{k,\ell}$ denotes $\ell$-th entry of $\bx_k$.
%\begin{align}
%\rmp\brc{x_m|\bi} &= \rmp_{s}\brc{x_{m} } \mone\set{ m\in \Sp{\bi}}
%\end{align}
%where
%\begin{align}
%\Sp\bi = \set{ m \in [M] : x_m \neq 0 }
%\end{align}
%
%As $M\uparrow \infty$
%\begin{align}
%\rmp\brc{\bx^\star| \bx, \bi} = \rmp\brc{\bxx^\star| \bx, \bi}
%\end{align}
%Thus, 
%\begin{subequations}
\begin{align}
\hspace{-3mm}\lim_{M\uparrow\infty} \hspace{-1mm} D_M\brc{\bi} \hspace{-.5mm} = \hspace{-.5mm} &\left. \eta \right. \Ex{\Ex{F_{\rm D} \brc{\xx^\star\brc{c^\star,q^\star} ;\rms} \vert\xx=\rms }{ } }{\rms} \nonumber \\
&+ \brc{1-\eta} \Ex{F_{\rm D} \brc{\xx^\star\brc{c^\star,q^\star};0} \vert \xx=0}{ }.
\label{eq:LimitDISTortion}
\end{align}
Noting that the limit in \eqref{eq:LimitDISTortion} is constant in $\bi$, we finally have
\begin{align}
\lim_{M\uparrow\infty} D_M = \lim_{M\uparrow\infty} D_M\brc{\bi},
\end{align}
where $D_M \coloneqq \Ex{D_M\brc{\bi}}{\bi}$.

By setting the distortion function to $F_{\rm D} \brc{x^\star ; x} = \abs{x^\star - x}^2$ and $F_{\rm D} \brc{x^\star ; x} = \mone\set{f_\dec\brc{x^\star} \neq x}$, the asymptotic average \ac{mse} and error rate are derived from \eqref{eq:LimitDISTortion}, respectively. More details are given in the extended version of the paper.
\end{proof}

\section{A Classic Application of the Results}
The asymptotic characterization of the \ac{rls}-based detection schemes enables us investigating \ac{masm} in various aspects. In this section, we discuss a particular application;~namely,~we study a \textit{box-\ac{lasso}} detector. This special form was investigated earlier in \cite{atitallah2017box}, where its asymptotic performance was characterized analytically for \ac{iid} Gaussian channel matrices. The analysis in this section hence extends the derivations in \cite{atitallah2017box} in several aspects, e.g. channel model. %~and~transmit signal constellation. 

Applications of the results are not restricted to this particular example. For instance, the asymptotic characterizations can be utilized to derive theoretical bounds on the performance of \ac{masm} transmission. Due to page limitation, we skip further applications and present them later in an extended version. %of the manuscript.

%, and tune its regularization parameter using the large-system characterization. Our investigations extends the results in \cite{atitallah2017box}, derived , to the wider class of unitarily invariant channel matrices.
\subsection{Box-LASSO Detection for SSK Transmission}
%\begin{figure*}[!t]
%\begin{align}
%\xx^\star\brc{c,q} = 
%\begin{cases}
%u &\qquad \phantom{-\theta- \ell  \leq} \ \yy\brc{c,q} \geq \theta   + u \vspace*{2mm}\\
%\yy\brc{c,q} - \theta &\qquad \phantom{- - \ell}\theta  \leq \yy\brc{c,q} \leq \theta + u\vspace*{2mm}\\
%0 &\qquad \phantom{-\ell} -\theta  \leq \yy\brc{c,q} \leq \theta \vspace*{2mm}\\
%\yy\brc{c,q} + \theta &\qquad -\theta-\ell  \leq \yy\brc{c,q} \leq -\theta\vspace*{2mm}\\
%-\ell &\qquad \phantom{-\theta-\ell  \leq} \ \yy\brc{c,q} \leq -\theta - \ell
%\end{cases}\label{eq:dec_box_lasso}
%\end{align}
%where
%\begin{align}
%\theta \coloneqq \dfrac{\tau\brc{c}\lambda}{2}
%\end{align}
%%\begin{figure*}[!t]
%%\begin{align}
%%\xx^\star\brc{c,q} = 
%%\begin{cases}
%%u &\qquad \phantom{-\theta - \ell  \leq} \ \yy\brc{c,q} \geq \dfrac{\tau\brc{c}\lambda}{2}   + u \vspace*{2mm}\\
%%\yy\brc{c,q} - \dfrac{\tau\brc{c}\lambda}{2} &\qquad \phantom{- - \ell}\dfrac{\tau\brc{c}\lambda}{2}  \leq \yy\brc{c,q} \leq \dfrac{\tau\brc{c}\lambda}{2} + u\vspace*{2mm}\\
%%0 &\qquad \phantom{-\ell} -\dfrac{\tau\brc{c}\lambda}{2}  \leq \yy\brc{c,q} \leq \dfrac{\tau\brc{c}\lambda}{2} \vspace*{2mm}\\
%%\yy\brc{c,q} + \dfrac{\tau\brc{c}\lambda}{2} &\qquad -\dfrac{\tau\brc{c}\lambda}{2}-\ell  \leq \yy\brc{c,q} \leq -\dfrac{\tau\brc{c}\lambda}{2}\vspace*{2mm}\\
%%-\ell &\qquad \phantom{-\dfrac{\tau\brc{c}\lambda}{2}-\ell  \leq} \ \yy\brc{c,q} \leq -\dfrac{\tau\brc{c}\lambda}{2} - \ell
%%\end{cases}\label{eq:dec_box_lasso}
%%\end{align}
%\end{figure*}
We consider the following special case of the setting: 
\begin{enumerate}
\item $\setS = \{\sqrt{P} \}$ for some power $P$.
This equivalently means $S=0$. This is an extreme case of \ac{sm}, known as \ac{ssk}, in which the~information~symbols~are completely conveyed via index modulation.
\item The \ac{rls}-based detector has the following specifications:
\begin{itemize}
\item The regularization function is set to $f_{\reg} \brc{\bv} = \lambda \norm{\bv}_1$ for some \textit{regularization parameter} $\lambda$ which is tunable. 
\item $\setX = \left[ -\ell , u \right]$ for some $\ell \geq 0$ and $u \geq \sqrt{P}$.
\item The decisioning function is $f_{\dec} \brc{x}  = \left. \sqrt{P} \right. \mone\set{x\geq \epsilon}$ for some \textit{threshold} $\epsilon$.
\end{itemize}
\end{enumerate}

This setting describes box-\ac{lasso} detection for \ac{ssk} signaling. The detector, in this case, relaxes the optimal Bayesian detector by approximating the exponent of the transmit signal's prior distribution with the $\ell_1$-norm and convexifying the set $\setS_0 =\{0,\sqrt{P}\}$ with the interval $\setX_0 = \left[ -\ell , u \right]$.

The asymptotic performance of this particular setting was investigated in \cite{atitallah2017box} for \ac{iid} Gaussian channel matrices. Based on simulations and universality results\footnote{See \cite{oymak2017universality} for some results on universality.}, the authors~conjec-tured that the analysis is further valid for \textit{non-Gaussian}~\ac{iid} channel~matrices. This conjecture was partially approved in \cite{atitallah2017box} using the Lindeberg principle.

Invoking our asymptotic derivations, the conjecture in \cite{atitallah2017box} is straightforwardly justified. In fact, from Proposition~\ref{proposition:RS}, it is observed that the channel matrix plays rule in the asymptotic characterization via the limiting distribution $\rmF_\mJ$. From random matrix theory, we know that this distribution is the same~for~all \ac{iid} channel matrices and follows the Marcenko-Pastur law \cite{tulino2004random}. It is hence concluded that the asymptotic characterizations of the performance for \ac{iid} Gaussian matrices extends to \ac{iid} matrices with other entry distributions, as well. 

\begin{remark}
\normalfont
The asymptotic results in this paper are given for bi-unitarily invariant random matrices. Hence, the analysis not only justifies the conjecture in \cite{atitallah2017box}, but also~extends~the~results beyond the \ac{iid} matrices. For non-\ac{iid} matrices, it is obvious from Proposition~\ref{proposition:RS}, that the derivations in \cite{atitallah2017box} are not valid anymore. The performance in such cases is straightforwardly characterized via Proposition~\ref{proposition:RS}.
\end{remark}
%We derive the limit of the average~\ac{mse}~and error rate for this setting considering the wide class~of~unitarily invariant channel matrices. Further extensions are straightforwardly followed by the same approach and will be presented the extended version.
\subsection{Decoupled Setting of the Box-LASSO Detector}
For the box-LASSO detector, the decoupled input reads $\xx = \sqrt{P} \psi $ with $\psi$ being a Bernoulli random variable described in Definition~\ref{def:Dec}. The decoupled soft estimation is further given~by %\eqref{eq:dec_box_lasso} on the top of this page, 
\begin{align}
\xx^\star\brc{c,q} \hspace*{-1mm} = \hspace*{-1mm}
\begin{cases}
u & \phantom{-\theta- \ell  \leq} \ \yy\brc{c,q} \geq \theta   + u \vspace*{2mm}\\
\yy\brc{c,q} - \theta & \phantom{- - \ell}\theta  \leq \yy\brc{c,q} \leq \theta + u\vspace*{2mm}\\
0 & \phantom{-\ell} -\theta  \leq \yy\brc{c,q} \leq \theta \vspace*{2mm}\\
\yy\brc{c,q} + \theta & -\theta-\ell  \leq \yy\brc{c,q} \leq -\theta\vspace*{2mm}\\
-\ell & \phantom{-\theta-\ell  \leq} \ \yy\brc{c,q} \leq -\theta - \ell
\end{cases}\label{eq:dec_box_lasso}
\end{align}
where $\theta \coloneqq {\tau\brc{c}\lambda}/{2}$ and $\yy\brc{c,q} = \sqrt{P} \psi +\theta\brc{c,q} \zz$.
% reads
%\begin{align}
%\yy\brc{c,q} = 
%\begin{cases}
%\theta\brc{c,q} \zz & \psi = 0\\
%\sqrt{P}+\theta\brc{c,q} \zz & \psi = 1
%\end{cases}.
%\end{align}

The asymptotic values for the \ac{mse} and error rate~are~derived by substituting \eqref{eq:dec_box_lasso} into the fixed point equations, given in Proposition~\ref{proposition:RS}, and calculating $c^\star$ and $q^\star$. 

Noting that $\zz$ is a zero-mean unit-variance complex Gaussian random variable, the expectations on the right hand side of \eqref{eq:rs-6a} and \eqref{eq:rs-6b} are analytically derived for a given~pair~of~$c$ and $q$ as sums of Gaussian integrals. Hence, they are straightforwardly calculated and replaced into the fixed-point equations. The resulting equations are then solved numerically.\footnote{An alternative approach is to iteratively find the stability point of~the~corresponding \textit{replica simulator}; see \cite{bereyhi2016statistical} for detailed discussions.}

\subsection{Numerical Results}
The analytic derivations are validated via numerical investigations considering the example~of~box-LASSO~detection. To this end, we consider a scenario with~$K=20$~transmit~terminals, each equipped with $M_{\rm u} = 8$ antennas and a single \ac{rf} chain, i.e. $L_{\rm u} =1$. Consequently, $\eta = L_{\rm u} / M_{\rm u} = 0.125$. Using the single \ac{rf}-chain, each terminal transmits three information bits in each channel use by \ac{ssk} signaling. We also set $P=1$.

At the receiver, an antenna array of size $N=80$~is~employed. Thus, $\alpha = K/N = 0.25$, and the channel load~is~$\xi = 1/M_{\rm u} \alpha = 0.5$. For signal recovery, a box-\ac{lasso} detector is used in which $\ell =0$ and $u =\sqrt{P} = 1$. The threshold in the decisioning function is set to $\epsilon = 0.5$.

The analytical results are given for an \ac{iid} channel matrix whose entries are zero-mean with variance $1/M$.\footnote{The entries can have an arbitrary distribution.} In this case, we have $\rmR\brc{c} = {\xi} / \brc{1-c}$ \cite{tulino2004random}. The simulations are given for Rayleigh fading model in which the channel entries are Gaussian. We set the noise variance with respect to the \ac{snr} which is defined as $\snr = P/\sigma^2$.

\begin{figure}[t]
\centering
\input{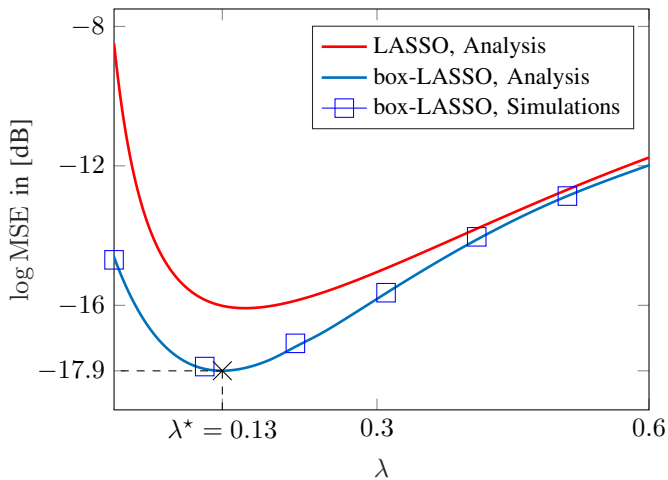}
\caption{\ac{mse} vs. $\lambda$. The simulations closely track analytical results. For box-\ac{lasso} detection, the \ac{mse} is minimized at $\lambda^\star = 0.13$.}
\label{fig:1}
\end{figure}

Fig.~\ref{fig:1} and \ref{fig:2} depict the average \ac{mse} and the error rate, achieved by the box-\ac{lasso} detector, against regularization parameter $\lambda$, when $\log \snr\hspace*{-.5mm} = \hspace*{-.5mm} 14$ dB. For sake of comparison, the plots for \textit{standard \ac{lasso}} are further given. By standard \ac{lasso}, we mean the extreme case of box-\ac{lasso} detection in which $u,\ell \uparrow \infty$. The solid lines in these figures indicate the analytic results given by Proposition~\ref{proposition:RS}. The squares~are given by numerical simulations with $1000$ realizations. As~the figures depict the simulated points closely track the large-system characterization. This observation validates the consistency of the analytic results in practical dimensions.

As Fig.~\ref{fig:1}~and~\ref{fig:2} demonstrate, for a given \ac{snr},~the~performance is optimized at some $\lambda^\star$. This value is analytically found via Proposition~\ref{proposition:RS}, as Fig.~\ref{fig:3} illustrates. In this figure,~$\lambda^\star$ is plotted against the \ac{snr}. At each \ac{snr}, $\lambda^\star$ is found such that the asymptotic \ac{mse}, derived via Proposition~\ref{proposition:RS}, is minimized. The regularization parameter in the box-\ac{lasso} detector is then set to $\lambda^\star$ and the achieved \ac{mse} is plotted against the \ac{snr}. As the figure depicts, numerical simulations are tightly consistent with the analytical derivations. This implies that the analytic derivations provide an easy and fast tool for efficient tuning of \ac{rls}-based detectors, even in practical dimensions.
%\begin{figure}[t]
%\centering
%\input{Figs/MSEvsSNR.tex}
%\caption{The minimum achievable \ac{mse} for box-\ac{lasso} detection vs. \ac{snr}.}
%\label{fig:4}
%\end{figure}

\section{Conclusion}
The average \ac{mse} and error rate were analytically derived for massive \ac{sm} \ac{mimo} settings when a generic \ac{rls}-based algorithm is employed for detection. The analysis was given for the large class of bi-unitarily invariant random matrices which includes various fading models. The asymptotic results were employed to study box-\ac{lasso} detectors for \ac{ssk} signaling. This particular application of the results extended the analysis in \cite{atitallah2017box} to a larger set of channel matrices. Numerical simulations showed close consistency with the~analytic~results, even in practical dimensions.~Other applications of the results are skipped due to the page~limitation and will be presented~in~an extended version which is currently under preparation.

\begin{figure}[t]
\centering
\input{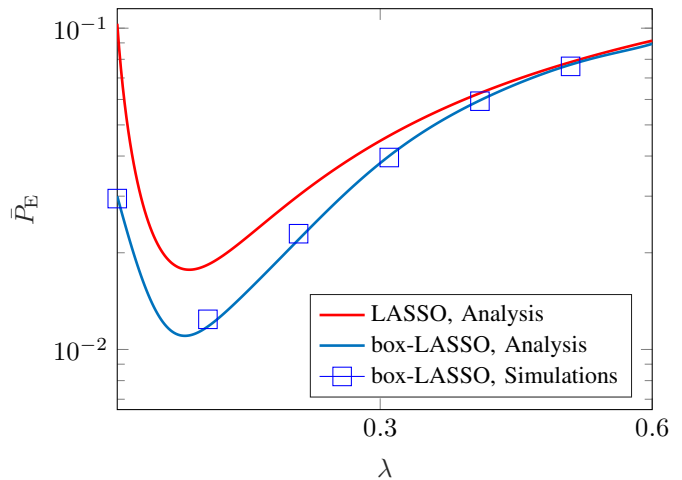}
\caption{Average error rate vs. $\lambda$ considering both analytical results, given by Proposition~\ref{proposition:RS}, and numerical simulations.}
\label{fig:2}
\end{figure}

\begin{figure}[t]
\centering
\input{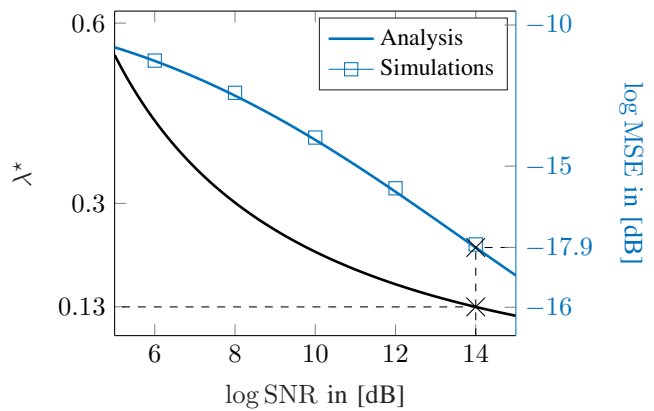}
\caption{Optimal regularization parameter and minimum achievable \ac{mse} via box-\ac{lasso} detection vs. \ac{snr}. The curves match the result seen in Fig.~\ref{fig:1}.}
\label{fig:3}
\end{figure}

\bibliography{ref}
\bibliographystyle{IEEEtran}
\end{document}